\newcommand{\HH}{\mathcal{H}}
\newcommand{\RR}{\mathbb{R}}
\newcommand{\CC}{\mathbb{C}}
\newcommand{\LL}{\mathcal{L}}
\newcommand{\dprime}{{\prime\prime}}
\newcommand{\Inn}[2]{\langle #1, #2 \rangle}
\newcommand{\paulix}{\bigl(\begin{smallmatrix} & 1 \\ 1 & \end{smallmatrix}\bigr)}
\newcommand{\pauliy}{\bigl(\begin{smallmatrix} & -i\\ i & \end{smallmatrix}\bigr)}
\newcommand{\pauliz}{\bigl(\begin{smallmatrix}1 & \\ & -1 \end{smallmatrix}\bigr)}
\let\uppercasenonmath\@gobble
\newtheorem{prop}{Proposition}
\title{Kramers degeneracy without eigenvectors}
\author[Bryan W. Roberts]{Bryan W. Roberts\\University of Southern California\\bryan.roberts@usc.edu}
\date{\today}
\begin{document}
	
\maketitle

\begin{abstract}
	Wigner gave a well-known proof of Kramers degeneracy, for time reversal invariant systems containing an odd number of half-integer spin particles. But Wigner's proof relies on the assumption that the Hamiltonian has an eigenvector, and thus does not apply to many quantum systems of physical interest. This note illustrates an algebraic way to talk about Kramers degeneracy that does not appeal to eigenvectors, and provides a derivation of Kramers degeneracy in this more general context.
\end{abstract}

\section{Introduction}

Wigner's derivation of Kramers degeneracy \cite{wigner1932kramer}, for time reversal invariant systems with an odd number of half-integer spin particles, has become standard repertoire in recent studies of the phenomenon \cite{bardarson2008a,KravtsovZirnbauer1992a,Liu2011a,Loss2009a,Meidan2011a,Ramazashvili2009a}. But Wigner's derivation assumes that the self-adjoint operator generating the dynamics (henceforth, the Hamiltonian) has at least one eigenvector. Mathematicians have long known that there are Hamiltonians of physical interest that do not have any eigenvectors. A closed linear operator $H$ has an eigenvector if the subspace of non-zero vectors $\psi$ such that $(H-\lambda)\psi=\mathbf{0}$ for some constant $\lambda$ is not empty. When the entire spectrum of $H$ can be expressed as values of $\lambda$ in such an equation, $H$ is said to have a \emph{pure point spectrum}. But the subspace can also be empty, in which case $H$ has no eigenvectors, and is said to have a \emph{purely continuous spectrum}. For example, in the Hilbert space $\LL^2(\RR)$ of square integrable functions $\psi(x):\RR\rightarrow\CC$, the multiplication operator $Q\psi(x):=x\psi(x)$ has no eigenvectors. Neither does the operator $H=P^2+Q$ in the same space, where $P:=-id/dx$.

This note illustrates a way to talk about Kramers degeneracy without eigenvectors, and provides a derivation of Kramers degeneracy in this more general context. The next section recapitulates Wigner's derivation of Kramers degeneracy. Section \ref{sec:continuous-degeneracy} introduces an algebraic generalization of degeneracy, which applies to all self-adjoint operators regardless of whether or not they admit an eigenvector. Section \ref{sec:continuous-kramers} then provides a derivation of Kramers degeneracy in this more general context. Like Wigner's derivation, our proof relies the premise that time reversal satisfies $T^2=-I$. For completeness, the scope of that premise is discussed in Section \ref{sec:discussion}.

\section{Wigner's proof of Kramers degeneracy}\label{sec:wigners-proof}

A self-adjoint operator with a pure point spectrum is degenerate if it has two orthogonal eigenvectors with the same eigenvalue. Wigner \cite{wigner1932kramer} gave the following proof that this happens whenever a fermion, or any other system in which the time reversal operator satisfies $T^2=-1$, happens to be time reversal invariant.

\begin{prop}[Wigner]\label{prop:wigner}
Let $\HH$ be a Hilbert space. Let $H$ be a linear self-adjoint operator with a non-empty point spectrum. Let $T:\HH\rightarrow\HH$ be an antiunitary bijection. If $T^2=-I$ (fermion condition) and $[T,H]=0$ ($T$-invariance), then $H$ has two orthogonal eigenvectors with the same eigenvalue (degeneracy).
\end{prop}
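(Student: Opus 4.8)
The plan is to follow Wigner's strategy exactly, but to be careful about where each hypothesis is used: produce one eigenvector from the non-empty point spectrum, apply $T$ to manufacture a second eigenvector with the same eigenvalue, and then invoke the fermion condition $T^2=-I$ to force the two to be orthogonal. The key observation is that orthogonality — not merely the existence of a second eigenvector — is what the sign in $T^2=-I$ buys us.

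First I would use the assumption that the point spectrum is non-empty to fix a non-zero $\psi\in\HH$ and a scalar $\lambda$ with $H\psi=\lambda\psi$. Because $H$ is self-adjoint, $\lambda$ is real, and this reality is what makes the next step go through. Applying $H$ to $T\psi$ and using $T$-invariance in the form $HT=TH$, I would compute $H(T\psi)=T(H\psi)=T(\lambda\psi)$, and then exploit the antilinearity of $T$ together with $\bar\lambda=\lambda$ to conclude $H(T\psi)=\lambda\,T\psi$. Thus $T\psi$ is again an eigenvector of $H$ with the same eigenvalue $\lambda$. A one-line check shows $T\psi\neq\mathbf{0}$: otherwise $T^2\psi=\mathbf{0}$, contradicting $T^2\psi=-\psi\neq\mathbf{0}$ (equivalently, $T$ is a bijection).

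The crux is to show that $\psi$ and $T\psi$ are orthogonal, and this is exactly where $T^2=-I$ enters. I would evaluate $\Inn{T\psi}{T^2\psi}$ in two ways. On one side, antiunitarity in the form $\Inn{Tx}{Ty}=\overline{\Inn{x}{y}}$ gives $\Inn{T\psi}{T(T\psi)}=\overline{\Inn{\psi}{T\psi}}=\Inn{T\psi}{\psi}$. On the other side, substituting $T^2\psi=-\psi$ gives $\Inn{T\psi}{-\psi}=-\Inn{T\psi}{\psi}$. Equating the two yields $\Inn{T\psi}{\psi}=-\Inn{T\psi}{\psi}$, hence $\Inn{T\psi}{\psi}=0$. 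Therefore $\psi$ and $T\psi$ are two orthogonal eigenvectors sharing the eigenvalue $\lambda$, which is the asserted degeneracy.

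The argument is short, and the only step demanding real care is this orthogonality computation, which is also the sole place the fermion condition is used. It is worth flagging that if instead $T^2=+I$ (the boson case), the same two evaluations collapse to the tautology $\Inn{T\psi}{\psi}=\Inn{T\psi}{\psi}$ and no orthogonality follows, so the minus sign is doing all the work. The remaining subtlety is purely conventional: one must fix the inner-product and antilinearity conventions consistently (here $\Inn{Tx}{Ty}=\overline{\Inn{x}{y}}$ and $T(\lambda\psi)=\bar\lambda\,T\psi$) so that the reality of $\lambda$ and the sign from $T^2=-I$ land in the right places.
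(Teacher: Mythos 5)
Your proof is correct and follows essentially the same route as the paper's: produce $T\psi$ as a second eigenvector via $T$-invariance and the reality of $\lambda$, then derive $\Inn{\psi}{T\psi}=-\Inn{\psi}{T\psi}$ from antiunitarity plus the fermion condition. Your added care (explicit appeal to self-adjointness for $\bar\lambda=\lambda$, the check that $T\psi\neq\mathbf{0}$, and the remark that $T^2=+I$ yields no conclusion) only tightens details the paper leaves implicit.
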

\begin{proof}
Since $H$ has a non-empty point spectrum, there exists an eigenvector $\psi$ with a real eigenvalue $\lambda$ such that $H\psi=\lambda\psi$. Let $\psi^\prime=T\psi$. By $T$-invariance, $H\psi^\prime = HT\psi=TH\psi = \lambda T\psi = \lambda\psi^\prime$. The eigenvectors $\psi$ and $\psi^\prime$ thus share the same eigenvalue $\lambda$. They are moreover orthogonal, since
\begin{equation*}
	\Inn{\psi}{T\psi} 	= \Inn{T\psi}{T^2\psi}^*
						= -\Inn{T\psi}{\psi}^*
						= -\Inn{\psi}{T\psi},
\end{equation*}
where the first equality follows from the fact that $T$ is antiunitary, and the second from the fermion condition. Therefore, $\Inn{\psi}{T\psi}=0$.
\end{proof}

Note that both Wigner's characterization of degeneracy, as well as his proof, rely on the existence of an eigenvector for the Hamiltonian.

\section{Algebraic definition of degeneracy}\label{sec:continuous-degeneracy}

There is a natural generalization of degeneracy to operators that do not have a pure point spectrum, which is algebraic in character. If an operator $A$ is degenerate in the sense of having two orthogonal eigenvectors with the same eigenvalue $\lambda$, then the set of vectors $\psi$ for which
\begin{equation}\label{eq:multiplicity}
	(A - \lambda I)\psi = 0
\end{equation}
forms a subspace of dimension at least 2. More generally, the dimension of this subspace will be equal to the number of orthogonal eigenvectors satisfying Equation \eqref{eq:multiplicity}, called the \emph{multiplicity} of $\lambda$. So, $A$ is non-degenerate if and only if all its eigenvalues have multiplicity $1$.

To generalize this concept, let $\{A\}^\prime$ be the \emph{commutant} of $A$, meaning the set of bounded linear operators that commute with $A$. Let $\{A\}^\dprime$ be the \emph{double-commutant} of $A$, meaning the set of closed linear operators that commute with all the elements of $\{A\}^\prime$. These sets are related to the previous notion of degeneracy by the following fact.

\begin{prop}[Blank, Exner, Havl\'i\v{c}ek]
	Let $A$ be a self-adjoint linear operator with a pure point spectrum. Then all the eigenvectors of $A$ have multiplicity 1 if and only if $\{A\}^\prime = \{A\}^\dprime$.
\end{prop}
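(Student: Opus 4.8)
The plan is to exploit the spectral decomposition furnished by the pure point spectrum and to reduce the whole statement to a block-by-block application of Schur's lemma. Since $A$ is self-adjoint with pure point spectrum, its eigenvectors span $\HH$, so $\HH$ decomposes as an orthogonal direct sum $\HH = \bigoplus_\lambda \HH_\lambda$ of eigenspaces, where $\HH_\lambda = \ker(A-\lambda I)$ has dimension equal to the multiplicity $m_\lambda$ of $\lambda$. Write $P_\lambda$ for the orthogonal projection onto $\HH_\lambda$. The first step is to record that a bounded operator commutes with $A$ if and only if it commutes with every eigenprojection $P_\lambda$; this is the standard fact that a bounded operator commuting with a self-adjoint operator commutes with all of its bounded Borel functions, and it is what lets me sidestep the domain issues attached to the (possibly unbounded) $A$.

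Next I would identify the two algebras explicitly. Commuting with all $P_\lambda$ is the same as being block diagonal with respect to the decomposition, so $\{A\}^\prime$ consists precisely of the bounded operators of the form $\bigoplus_\lambda B_\lambda$ with each $B_\lambda \in \BB(\HH_\lambda)$; that is, $\{A\}^\prime \cong \prod_\lambda \BB(\HH_\lambda)$. For the double commutant I would compute $(\{A\}^\prime)^\prime$: an operator commuting with every element of $\{A\}^\prime$ in particular commutes with each $P_\lambda$, hence is again block diagonal, and within the block $\HH_\lambda$ it must commute with the \emph{full} operator algebra $\BB(\HH_\lambda)$. By Schur's lemma the only operators commuting with $\BB(\HH_\lambda)$ are the scalar multiples $d_\lambda I_{\HH_\lambda}$. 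Hence $\{A\}^\dprime$ is exactly the set of operators that act as a scalar on each eigenspace, i.e.\ the functions of $A$.

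With both algebras in hand the equivalence is immediate. We have $\{A\}^\prime = \{A\}^\dprime$ if and only if $\BB(\HH_\lambda) = \CC\, I_{\HH_\lambda}$ for every $\lambda$, and $\BB(\HH_\lambda)$ collapses to the scalars exactly when $\dim \HH_\lambda = 1$. So the two algebras coincide if and only if every multiplicity $m_\lambda$ equals $1$, which is the assertion. Phrased invariantly, the content is the classical statement that $\{A\}^\prime$ is maximal abelian precisely when $A$ has multiplicity-free spectrum.

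The main obstacle, and the step I would be most careful about, is not the algebra but the bookkeeping for unbounded and merely closed operators. Because $A$ itself is generally unbounded while $\{A\}^\prime$ contains only bounded operators, the symbol $\{A\}^\dprime$ has to be read as the commutant $(\{A\}^\prime)^\prime$ for the equality $\{A\}^\prime=\{A\}^\dprime$ to be meaningful; with that reading the Schur-lemma computation above goes through verbatim. If one instead admits all \emph{closed} operators commuting with $\{A\}^\prime$, then $\{A\}^\dprime$ also contains the unbounded diagonal operators $\bigoplus_\lambda d_\lambda I_{\HH_\lambda}$ with $\sup_\lambda|d_\lambda|=\infty$, and the cleanest fix is to compare the two sides within the bounded operators — equivalently, to observe that these extra elements are all affiliated functions of $A$ that do not alter the block structure detecting the multiplicities. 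Verifying the equivalence between commutation with $A$ and commutation with the eigenprojections, with due attention to the domain of $A$, is the one place where a genuine operator-theoretic argument rather than pure linear algebra is needed.
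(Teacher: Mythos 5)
Your proof is correct, but note that the paper itself offers no proof of this proposition: its entire ``proof'' is a citation to Theorem 5.8.6 of Blank, Exner and Havl\'i\v{c}ek. What you have written is the natural self-contained argument for that cited theorem: decompose $\HH=\bigoplus_\lambda \HH_\lambda$ into eigenspaces, identify $\{A\}^\prime$ with the bounded block-diagonal operators (using the equivalence between commuting with $A$ and commuting with every eigenprojection $P_\lambda$), compute $(\{A\}^\prime)^\prime$ block-by-block via Schur's lemma, and observe that the two algebras coincide exactly when every block is one-dimensional. Two remarks. First, your closing caveat is not pedantic, and is in fact sharper than the paper: since the paper defines $\{A\}^\dprime$ as the set of \emph{closed} operators commuting with $\{A\}^\prime$, the literal equality $\{A\}^\prime=\{A\}^\dprime$ can never hold when $A$ is unbounded ($A$ itself, and any unbounded diagonal operator $\bigoplus_\lambda d_\lambda I_{\HH_\lambda}$, lies in $\{A\}^\dprime$ but not in $\{A\}^\prime$); so the proposition, and its use as the hypothesis to be contradicted in the paper's Proposition 3, requires exactly your fix of comparing the two sets within the bounded operators, or equivalently reading non-degeneracy as the inclusion $\{A\}^\prime\subseteq\{A\}^\dprime$. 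Second, the one step you defer to a ``standard fact'' deserves its short direct check in the pure-point case: if $BA\subseteq AB$ then $B$ preserves each $\HH_\lambda$, hence also $\HH_\lambda^\perp$ (the closed span of the remaining eigenspaces), so $BP_\lambda=P_\lambda B$; conversely, if $B$ commutes with every $P_\lambda$, then for $\psi$ in the domain of $A$ the estimate $\sum_\lambda \lambda^2\|P_\lambda B\psi\|^2\le\|B\|^2\sum_\lambda \lambda^2\|P_\lambda\psi\|^2$ shows $B\psi$ lies in the domain as well, with $AB\psi=BA\psi$. With these two points made explicit, your argument is a complete proof of the statement the paper merely cites.
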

\begin{proof}
	\cite[Theorem 5.8.6]{BlankExnerHavlicek}
\end{proof}

For self-adjoint operators with a pure point spectrum, non-degeneracy is thus equivalent to the condition that $\{A\}^\prime = \{A\}^\dprime$. But the latter condition, unlike the multiplicity of an eigenvalue of $A$, is well-defined even when $A$ has no eigenvectors.

\section{Kramers degeneracy without eigenvectors}\label{sec:continuous-kramers}

The statement $\{A\}^\prime = \{A\}^\dprime$ generalizes the usual notion of non-degeneracy; thus, the statement $\{A\}^\prime \neq \{A\}^\dprime$ generalizes the notion of degeneracy. This allows for the following more general derivation of Kramers degeneracy for time reversal invariant systems with an odd number of half-integer spin particles.

\begin{prop}\label{prop:continuous-wigner}
Let $H$ be a (possibly continuous-spectrum) self-adjoint operator on a separable Hilbert space, which is not the zero operator. Let $T:\HH\rightarrow\HH$ be an antiunitary bijection. If $T^2=-I$ (fermion condition) and $[T,H]=0$ ($T$-invariance), then $\{H\}^\prime \neq \{H\}^\dprime$ (general degeneracy).
\end{prop}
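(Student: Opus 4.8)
The plan is to convert the conclusion $\{H\}^\prime \neq \{H\}^\dprime$ into a statement about noncommutativity and then obstruct the commutative case using the fermion condition. The first observation I would record is that one always has $\{H\}^\dprime \subseteq \{H\}^\prime$: every spectral projection of $H$ lies in $\{H\}^\prime$, so any element of $\{H\}^\dprime$ commutes with all of them and hence with $H$ itself. Consequently $\{H\}^\prime = \{H\}^\dprime$ holds if and only if $\{H\}^\prime$ is abelian. Indeed, if $\{H\}^\prime$ is abelian then $\{H\}^\prime \subseteq (\{H\}^\prime)^\prime = \{H\}^\dprime$, and the reverse inclusion is automatic; conversely, if $\{H\}^\prime=\{H\}^\dprime$ then $\{H\}^\prime$ is abelian because $\{H\}^\dprime$ is the (abelian) von Neumann algebra generated by the single self-adjoint operator $H$. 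Thus it suffices to prove that $\{H\}^\prime$ is \emph{non-abelian}, and I would argue by contradiction, supposing $\{H\}^\prime$ abelian, hence maximal abelian and equal to $\{H\}^\dprime$.

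Under this assumption $H$ has simple (multiplicity-free) spectrum, so by the multiplicity theory of self-adjoint operators on a separable Hilbert space I would fix a unitary identification $\HH \cong \LL^2(X,\mu)$ under which $H$ becomes multiplication by a real-valued function and $\{H\}^\prime$ becomes the maximal abelian algebra $\LL^\infty(X,\mu)$ of all multiplication operators $M_g$. The next step is to determine the form of $T$ in this model. Since $T$ is antiunitary and commutes with $H$, one has $T f(H) T^{-1} = \bar f(H)$ for bounded Borel $f$; taking $f=\chi_\Omega$ shows that $T$ commutes with every spectral projection of $H$. As these projections generate $\{H\}^\prime = \LL^\infty(X,\mu)$, conjugation by $T$ acts on this algebra as ordinary complex conjugation, $T M_g T^{-1} = M_{\bar g}$.

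From here I would bring in the standard conjugation $J$ given by $(J\psi)(x) = \overline{\psi(x)}$. The composite $U := TJ$ is a linear unitary, and $U M_g U^{-1} = T M_{\bar g} T^{-1} = M_g$ for all $g\in\LL^\infty(X,\mu)$, so $U$ lies in the commutant of the maximal abelian algebra, which is the algebra itself. Hence $U = M_u$ for some unimodular $u$, giving $T = M_u J$. A direct computation then yields $T^2 = M_u\, J\, M_u\, J = M_{u\bar u} = I$, using $J M_u J = M_{\bar u}$ and $J^2 = I$. Because $H\neq 0$ forces $\HH \neq \{0\}$, this contradicts the fermion condition $T^2=-I$, which completes the argument.

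I expect the main obstacle to be the structural step identifying $T$: everything hinges on showing that an antiunitary commuting with a maximal abelian algebra must be a phase multiple of a standard conjugation, and therefore squares to $+I$. This is precisely where the fermion condition is violated, and it is the algebraic analogue of the orthogonality computation $\Inn{\psi}{T\psi} = -\Inn{\psi}{T\psi}$ in Wigner's proof, now carried out at the level of the whole algebra rather than a single eigenvector. By comparison, the remaining pieces—the inclusion $\{H\}^\dprime \subseteq \{H\}^\prime$, the reduction to non-abelianness, the appeal to multiplicity theory, and the final computation $T^2=I$—are routine.
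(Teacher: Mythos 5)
Your proof is correct and follows essentially the same strategy as the paper's: assume $\{H\}^\prime = \{H\}^\dprime$ for contradiction, factor $T$ as a unitary times the natural conjugation in the spectral representation of $H$, show that the unitary part lies in the abelian algebra of functions of $H$, and conclude $T^2 = I$, contradicting the fermion condition. The only differences are in technical packaging: the paper puts $U \in \{H\}^\dprime$ directly and writes $U = e^{ig(H)}$ via the double-commutant theorem, whereas you first reduce the hypothesis to ``$\{H\}^\prime$ is maximal abelian,'' invoke multiplicity theory for the $\LL^2(X,\mu)$ model, and work with multiplication operators $M_u$ — which neatly sidesteps the paper's implicit step that the unitary $f(H)$ admits a real-valued logarithm $g(H)$ inside the algebra.
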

\begin{proof}
	We show that the supposition $\{H\}^\prime=\{H\}^\dprime$ leads to a contradiction. Let $K$ be the conjugation operator in the $H$ basis\footnote{The \emph{conjugation operator in the $H$ basis} is defined as follows.  Let $\psi(x)$ be a square integrable function in the spectral representation of $H$. Then $K$ is the unique bijection such that $T\psi(x)=\psi^*(x)$ for all $\psi(x)$. It follows that $K$ is antiunitary, $K^2=I$, and $[K,H]=0$. See \cite[\S XV.6]{messiah1999} for an introduction.}. Since $T$ is antiunitary, there exists a unitary operator $U$ such that $T=UK$. By $T$-invariance,
\begin{equation*}
	0=[T,H]=(UK)H - H(UK) = (UH)K - (HU)K,
\end{equation*}
where the last line applies the fact that $[K,H]=0$. This implies $UH - HU = 0$. So, $U$ is a bounded linear operator that commutes with $H$, and hence $U\in\{H\}^\prime$. Our hypothesis $\{H\}^\prime=\{H\}^\dprime$ now implies that $U\in\{H\}^\dprime$. By von Neumann's double-commutant theorem, it follows that we can write $U=f(H)$ for some function in the weak closure of $H$. But $U$ is unitary, and so can be expressed as $U=e^{iS}$ for some self-adjoint operator $S$ \cite[Proposition 5.3.8]{BlankExnerHavlicek}. Combining these two facts, we have that $U=e^{ig(H)}$, where $g(H)=S$ is self-adjoint, and is thus a real-valued function. But for real-valued $g$, $[g(H),K]=0$, and so $KUK^{-1}=e^{Kig(H)K} = e^{-ig(H)K^2}=U^*$ (where we have applied the fact that $K^2=I$ in the last equality). Therefore,
\begin{equation*}
	T^2 = UKUK = UU^* = I.
\end{equation*}
This contradicts the fermion condition.
\end{proof}

\section{Discussion of the $T^2=-I$ premise}\label{sec:discussion}

The assumption that $T^2=-I$ is a generic feature of any system satisfying the Pauli spin relations for an odd number of particles. Here we will illustrate the derivation for a single particle system; the generalization to more complex systems is straightforward.

Let $\sigma_1 = \paulix$, $\sigma_2 = \pauliy$, $\sigma_3 = \pauliz$ be the generators of the (irreducible) Pauli representation. The time reversal operator $T$ for such a system may be assumed to have the following properties.
	\begin{enumerate}
		\item \emph{$T$ is antiunitary}. This is a feature of time reversal in any context, first pointed out by Wigner \cite{wigner1931}.
		\item \emph{$T$ is an involution, $T^2=e^{i\theta}I$.} This captures what it means for $T$ to be a ``reversal'': apply it twice, and you get back to where you started up to a phase factor $e^{i\theta}$.
		\item \emph{$T$ reverses angular momentum.} Since each $\sigma_i$ represents a kind of angular momentum, they each reverse sign under time reversal. This can be viewed as stemming from the fact that the $\sigma_i$ are generators of a rotation group $R_\theta=e^{i\theta\cdot\sigma}$. Insofar as time reversal does not pick out any preferred direction with respect to this group, one may assume that $T$ commutes with $R_\theta$. But $T$ is antiunitary, and if an antiunitary operator $T$ commutes with $R_{\theta}=e^{i\theta\cdot\sigma}$, then it must \emph{anti}commute with the generator $\sigma$.
	\end{enumerate}
These properties of the time reversal operator imply the following.

	\begin{prop}
		Let $\sigma_1$, $\sigma_2$, $\sigma_3$ be the spin operators in the Pauli representation, and let $K$ be the conjugation operator in the $\{\binom{1}{0},\binom{0}{1}\}$ basis. If $T$ is any antilinear involution that satisfies $T\sigma_iT^{-1}=-\sigma_i$ for each $i=1,2,3$, then $T=k\sigma_2K$ for some complex unit $k$, and $T^2=-I$.	
	\end{prop}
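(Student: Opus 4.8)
The plan is to reduce the antilinear problem to a purely linear one by factoring out the conjugation $K$, and then to exploit the irreducibility of the Pauli representation. First I would write $T=UK$, where $U:=TK$ is a genuine \emph{linear} operator, being the composite of the two antilinear maps $T$ and $K=K^{-1}$; it is invertible because $T$ and $K$ both are. This is the step that converts the antilinear hypothesis into a linear one, so that ordinary matrix algebra can take over.

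Next I would translate the three conditions $T\sigma_iT^{-1}=-\sigma_i$ into conditions on $U$. In the standard basis $\sigma_1$ and $\sigma_3$ are real while $\sigma_2$ is purely imaginary, so $K\sigma_1K=\sigma_1$, $K\sigma_2K=-\sigma_2$, and $K\sigma_3K=\sigma_3$. Substituting $T=UK$, using $K^2=I$, and cancelling the (invertible) $K$ on the right then yields $U\sigma_1U^{-1}=-\sigma_1$, $U\sigma_2U^{-1}=\sigma_2$, and $U\sigma_3U^{-1}=-\sigma_3$; that is, $U$ anticommutes with $\sigma_1$ and $\sigma_3$ and commutes with $\sigma_2$.

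The crux is to show these relations force $U$ to be a scalar multiple of $\sigma_2$. Since $\sigma_2$ itself anticommutes with $\sigma_1,\sigma_3$ and commutes with itself, I would set $W:=U\sigma_2$ and verify, using the anticommutation relations $\{\sigma_i,\sigma_j\}=2\delta_{ij}I$, that $W$ commutes with each of $\sigma_1,\sigma_2,\sigma_3$. Because the $\sigma_i$ generate all of $M_2(\CC)$ and act irreducibly, Schur's lemma forces $W=kI$ for some scalar $k$, whence $U=k\sigma_2$ and $T=k\sigma_2K$. (Equivalently one could solve the commutation relations for a general $2\times2$ matrix, but the Schur argument is cleaner.) I expect this irreducibility step to be the conceptual heart of the proof; the genuine hazard elsewhere is the antilinear bookkeeping, namely remembering to conjugate every scalar moved past $K$.

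Finally I would pin down $k$ by computing $T^2$ directly. Pushing the scalars past $K$ and using $K\sigma_2K=-\sigma_2$ together with $\sigma_2^2=I$ gives $T^2=-|k|^2I$. This is a negative real multiple of the identity, so the involution hypothesis $T^2=e^{i\theta}I$ can hold only if $e^{i\theta}=-1$ and $|k|=1$. Hence $k$ is a complex unit and $T^2=-I$, as claimed. Note that this last step extracts $|k|=1$ from the involution property alone, so antiunitarity of $T$ need not be invoked separately.
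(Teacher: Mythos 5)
Your proof is correct and takes essentially the same route as the paper's: both convert the antilinear hypothesis into a linear one, apply Schur's lemma to a linear operator that commutes with all the $\sigma_i$ to get $T=k\sigma_2K$, and then use the involution hypothesis $T^2=e^{i\theta}I$ to force $|k|=1$ and $T^2=-I$. The only difference is bookkeeping: the paper conjugates by $\tilde{T}=\sigma_2K$ in a single step (using that $\tilde{T}$ itself reverses the spins), whereas you first strip off $K$ and then multiply by $\sigma_2$ --- the same argument in two moves.
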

	\begin{proof}
	 	Define $\tilde{T}:=\sigma_2 K$, and let $T$ be any involution that reverses the sign of $\sigma_i$ for each $i=1,2,3$. Then,
	\begin{equation*}
	   (\tilde{T}T)\sigma_i(\tilde{T}T)^{-1} = \tilde{T}(T\sigma_iT^{-1})\tilde{T}^{-1} = \tilde{T}(-\sigma_i)\tilde{T}^{-1} = \sigma_i.
	\end{equation*}
	$\tilde{T}T$ commutes with all the generators of the representation, and so it commutes with everything. But the Pauli representation is irreducible, so $\tilde{T}T=-kI$ for some $k\in\CC$ by Schur's lemma \cite[Theorem 6.7.1]{BlankExnerHavlicek}. Multiplying on the left by $-\tilde{T}$ and recalling that $\tilde{T}^2=-I$,
	\begin{equation*}
		T = k\tilde{T}=k\sigma_2K.
	\end{equation*}
	This $T$ is an involution, so there is a $c\in\CC_{unit}$ such that $cI = T^2 = (k\tilde{T})(k\tilde{T}) = kk^*\tilde{T}^2 = -kk^*$. But $-kk^*$ is real and negative for any complex constant $k$, and the only complex unit that is real and negative is $c=-1$. Thus, $c=-1$, and so $kk^*=-c=1$. This implies that $k$ is a complex unit and $T^2=-I$.
	\end{proof}  
Note that the particular choice of representation adopted here is immaterial, since such irreducible representations of the Pauli relations are unitarily equivalent.




\begin{thebibliography}{10}

\bibitem{bardarson2008a}
J.~H. Bardarson.
\newblock A proof of the {K}ramers degeneracy of transmission eigenvalues from
  antisymmetry of the scattering matrix.
\newblock {\em Journal of Physics A: Mathematical and Theoretical},
  41(40):405203 (4 pages), 2008.

\bibitem{BlankExnerHavlicek}
Ji\v{r}\'i Blank, Pavel Exner, and Miloslav Havl\'i\v{c}ek.
\newblock {\em Hilbert Space Operators in Quantum Physics}.
\newblock Springer Science and Business Media B.V., second edition, 2008.

\bibitem{KravtsovZirnbauer1992a}
V.~E. Kravtsov and M.~R. Zirnbauer.
\newblock Kramers degeneracy and quantum jumps in the persistent current of
  disordered metal rings.
\newblock {\em Physical Review B}, 46(7):4332--4335, 1992.

\bibitem{Liu2011a}
Chang Liu, Yongbin Lee, Takeshi Kondo, Eun~Deok Mun, Malinda Caudle, B.~N.
  Harmon, Sergey~L. Bud'ko, Paul~C. Canfield, and Adam Kaminski.
\newblock {Metallic surface electronic state in half-Heusler compounds $R$PtBi
  ($R$$=$ Lu, Dy, Gd)}.
\newblock {\em Physical Review B}, 83:205133 (4 pages), 2011.

\bibitem{Loss2009a}
Michael Loss, Tadahiro Miyao, and Herbert Spohn.
\newblock Kramers degeneracy theorem in nonrelativistic qed.
\newblock {\em Letters in Mathematical Physics}, 89:21--31, 2009.

\bibitem{Meidan2011a}
Dganit Meidan, Tobias Micklitz, and Piet~W. Brouwer.
\newblock Topological classification of adiabatic processes.
\newblock {\em Physical Review B}, 84(2):161--164, 2011.

\bibitem{messiah1999}
Albert Messiah.
\newblock {\em Quantum Mechanics, Two Volumes Bound as One}.
\newblock New York: Dover, 1999.

\bibitem{Ramazashvili2009a}
Revaz Ramazashvili.
\newblock Kramers degeneracy in a magnetic field and {Z}eeman spin-orbit
  coupling in antiferromagnetic conductors.
\newblock {\em Physica B: Condensed Matter}, 404:360--363, 2009.

\bibitem{wigner1931}
Eugene~P. Wigner.
\newblock {\em Group Theory and its Application to the Quantum Mechanics of
  Atomic Spectra}.
\newblock New York: Academic Press (1959), 1931.

\bibitem{wigner1932kramer}
Eugene~P. Wigner.
\newblock {\"Uber die Operation der Zeitumkehr in der Quantenmechanik}.
\newblock {\em Nachrichten der Gesellschaft der Wissenschaften zu G\"ottingen
  Mathematisch-Physikalische Klasse}, pages 546--559, 1932.

\end{thebibliography}
\end{document}